\documentclass[12pt]{article}
\usepackage{amsthm}
\usepackage{latexsym}
\pdfoutput=1
\usepackage[latin1]{inputenc}
\usepackage{amsfonts}
\usepackage{natbib}
\usepackage{graphics}
\usepackage{graphicx}
\usepackage{multirow}
\usepackage{hhline}
\usepackage{amsmath}
\usepackage{amsfonts}
\usepackage{amssymb}

\def\E{\operatorname{\bf E}}

\def\Q{\operatorname{\bf Q}}
\def\1{{\bf 1}}
\def\R{\mathbb R}

\def\Q{\mathbb Q}

\def\beq{\begin{equation}}
\def\eeq{\end{equation}}

\def\process#1{#1=\{#1_t\}_{t\geq 0}}
\newtheorem{theorem}{Theorem}[section]

\newtheorem{corollary}{Corollary}[section]

\begin{document}
\title{Barrier Options under Lévy Processes: a Short-Cut
}
\author{
Jos\'{e} Fajardo\thanks{Brazilian School of Public and Business
Administration, Getulio Vargas Foundation, Praia de Botafogo 190,
22253 900 - Rio de Janeiro, RJ, Brazil. E-mail address:
jose.fajardo@fgv.br} }
\date{\today } \maketitle
\begin{abstract}
In this paper we present a simple way  to price a class of barrier
options when the underlying process is driven by a huge class of
Lévy processes. To achieve our goal we assume that our market
satisfies a symmetry property. In case of not satisfying that
property some approximations and relationships can be
obtained.\vspace{3 mm }
\newline {\bf Keywords:} Barrier Options; Lévy Processes; Implied volatility; Market Symmetry.
\\
{\bf JEL Classification:} C52; G10
\end{abstract}
\section{Introduction}
Recently, the relationship between the implied volatility symmetry
and other symmetry concepts has been established, as
\cite{FajardoMordecki2006b} have shown for L\'evy process and
\cite{CarrLee2008}  have shown for local/stochastic volatility
models and time-changed Lévy processes. Also,
\cite{FajardoMordecki2008} have studied the relationship of symmetry
properties with the skewness premium. Moreover, it is known that
important applications, such as the construction of semi-static
hedges for exotic options can be
obtained, as \cite{CEG98} and \cite{CarrLee2008} have shown, and
its extension to multivariate derivatives due to \cite{molchanov}.\\

The importance of such symmetry properties have demanded the
analysis of conditions to verify what kind of underlying processes
satisfy such properties, as \cite{CarrLee2008},
\cite{Fajardomordecki2005}, \cite{FajardoMordecki2007} and \cite{MT} results have shown.\\

On the other hand, \cite{Bates97} verified that put-call symmetry
does not hold in practice, by constructing a hypothesis test that
compares the observed relative call-put prices and the ratio given
by Bates' rule. The absence of symmetry has also been reported by
\cite{CarrWu07}. They found asymmetric implied volatility smiles in
currency options. But, besides these empirical findings other
applications of symmetry properties can justify their use, as for
example the pricing of credit instruments, such as barrier
contracts, as \cite{CarrLee2008} have shown by constructing
semi-static hedging for a class of barrier options.\\

 In the context of Lévy processes the pricing of barrier contracts is a delicated issue. We find
 some contributions, as \cite{lipton2002} who
considers jump-diffusion process with exponentially distributed
jumps. Also, \cite{KouWang2004} price barrier option with the double
exponential jump-diffussion model. \cite{asmussenetal07} consider
the CGMY model to price some kind of barrier options. For more
general Lévy processes we have the contributions of \cite{rogers},
for spectrally one-sided Lévy processes, and \cite{asmussenetal04},
for phase-type jumps. More recently, Fourier transform methods have
been used to price barrier options under more general Lévy processes
see \cite{egp09}, \cite{egp11} and \cite{cc10}. For more references see \cite{wimcariboni}.\\

In this paper we present a very simple way to price special kinds of
exotic options, such as digital call and put options,
asset-or-nothing options, under a huge family of Lévy processes
assuming that a symmetry property holds. When symmetry property does
not hold we obtain some approximations and we price the down-and-in
power option, by extending Th. 7.6 in \cite{CarrLee2008} for our
class of Lévy processes.\\

 The paper is organized as follows, in Section 2 we introduce our model. In Section 3 we present our main
results for symmetric markets. In Section 4 we present the results
under absence of symmetry. In Section 5 we present some numerical
examples and the last section concludes.
\section{Market Model}
Consider a real valued stochastic process $\process X$, defined on a
stochastic basis ${\cal B}=(\Omega, {\cal F},{\bf F}=({\cal
F}_t)_{t\geq 0}, \Q)$, being c\`adl\`ag, adapted, satisfying
$X_0=0$, and such that for $0\leq s< t$ the random variable
$X_t-X_s$ is independent of the $\sigma$-field ${\cal F}_s$, with a
distribution that only depends on the difference $t-s$. Assume also
that the stochastic basis ${\cal B}$ satisfies the usual conditions
(see \cite{jacodShiryaev87}). The process $X$ is a L\'evy process,
and is also called a process with stationary independent increments. For L\'evy process in Finance see \cite{Schoutens2003} and \cite{CT04}.\\

In order to characterize the law of $X$ under $\Q$, consider, for
$q\in\R$ the L\'evy-Khinchine formula, that states
\begin{equation}\label{e:lk}
\E e^{iq X_t}= \exp\Big\{t\Big[iaq-\frac 12\sigma^2q^2+
\int_{\R}\big(e^{iq y}-1-iq h(y)\big)\Pi(dy)\Big]\Big\},
\end{equation}
with
\begin{equation*}
h(y)=y{\bf 1}_{\{|y|<1\}}
\end{equation*}
a fixed truncation function, $a$ and $\sigma\geq 0$ real constants,
and $\Pi$ a positive measure on
${\R}\setminus\{0\}$\footnote{$\Pi(\{0\})$ could be defined as 0.
Here we follows \cite{CT04}.} such that $\int (1\wedge
y^2)\Pi(dy)<+\infty$, called the \emph{L\'evy measure}. The triplet
$(a,\sigma^2,\Pi)$ is the \emph{characteristic triplet} of the
process, and completely
determines its law.\\

Now we use the extension to the complex plane used by
\cite{lewis01}, that is we define the Lévy-Khinchine formula in the
strip $\{z:a<Im(z)<b\}$ where $a\leq -1$ and $b\geq 0$. Then we can
define the {\it characteristic exponent} of the process $X$, in this
strip, by:

\begin{equation}\label{e:char-exp}
\psi(z)=azi-\frac 12\sigma^2z^2+
\int_{\R}\big(e^{izy}-1-izh(y)\big)\Pi(dy)
\end{equation}
this function $\psi$ is also called the {\it cumulant} of $X$,
having $\E|e^{zX_t}|<\infty$ for all $t\geq 0$, and $\E
e^{zX_t}=e^{t\psi(z)}$. For $t=1$, Formula \eqref{e:char-exp}
reduces to exponent of eq. \eqref{e:lk} with $Im(z)=0$.
\subsection{L\'evy market} By a \emph{L\'evy market} we mean a
model of a financial market with two assets: a deterministic savings
account $\process B$, with
\begin{equation*}
B_t=e^{rt},\qquad r\ge 0,
\end{equation*}
where we take $B_0=1$ for simplicity, and a stock $\process S$, with
random evolution modeled by
\begin{equation}\label{e:market}
S_t=S_0e^{X_t},\qquad S_0=e^x>0,
\end{equation}
where $\process X$ is a L\'evy process.\\

In this model we assume that the stock pays dividends, with constant
rate $\delta\ge 0$, and that the given probability measure $\Q$ is
the chosen equivalent martingale measure. In other words, prices are
computed as expectations with respect to
$\Q$, and the discounted and reinvested process $\{e^{-(r-\delta)t}S_t\}$ is a $\Q$-martingale.\\

In terms of the characteristic exponent of the process this means
that

\begin{equation}\label{e:psiuno}
\psi(1)=r-\delta,
\end{equation}
based on the fact, that $\E
e^{-(r-\delta)t+X_t}=e^{-t(r-\delta-\psi(1))}=1$, and condition
\eqref{e:psiuno} can also be formulated in terms of the
characteristic triplet of the process $X$ as
\begin{equation}\label{e:a}
a=r-\delta-\sigma^2/2-\int_{\R}\big(e^y-1-\mathbf{1}_{\{|y|<1\}}\big)\Pi(dy).
\end{equation}
Then,
\begin{equation}\label{ok}
\psi(z)=iz(r-\delta-\frac {\sigma^2}{2})-z^2\frac
{\sigma^2}{2}+\int_{-\infty}^{+\infty}[iz(1-e^y)+(e^{izy}-1)]\Pi(dy)
\end{equation}
Henceforth we will denote this exponent by $\psi_\beta$ due to its
future dependence on parameter $\beta$ of our jump structure.
\subsection{Market Symmetry}
Here we use the symmetry concept introduced in
\cite{FajardoMordecki2006b}. We define a L\'evy market to be
\emph{symmetric} when the following relation holds:

\begin{equation}\label{e:symmetry}
{\cal L}\big(e^{-(r-\delta)t+X_t}\mid \Q\big)={\cal
L}\big(e^{-(\delta-r)t-X_t}\mid\tilde{\Q}\big),
\end{equation}
meaning equality in law. Otherwise we call the Lévy market
\emph{asymmetric}. As \cite{FajardoMordecki2006b} pointed out, a
necessary and sufficient condition for \eqref{e:symmetry} to hold is

\begin{equation}\label{e:pisymmetric}
\Pi(dy)=e^{-y}\Pi(-dy).
\end{equation}
This ensures $\tilde{\Pi}=\Pi$, and from this follows
$a-(r-\delta)=\tilde{a}-(\delta-r)$, giving \eqref{e:symmetry}, as
we always have $\tilde{\sigma}=\sigma$.\\

Moreover, in L\'evy markets with jump measure of the form

\begin{equation}\label{e:beta}
\Pi(dy)=e^{\beta y}\Pi_0(dy),
\end{equation}
where $\Pi_0(dy)$ is a symmetric measure, i.e.
$\Pi_0(dy)=\Pi_0(-dy)$ and $\beta$ is a parameter that describe the
asymmetry of the jumps, everything with respect to the risk
neutral measure $\Q$.\\

 As a consequence of \eqref{e:pisymmetric},
\cite{FajardoMordecki2006b} found that the market is
symmetric if and only if $\beta=-1/2$.\\

Now substituting in eq. (\ref{ok}), we observe the dependence of the
characteristic exponent on the parameter $\beta$:
\begin{equation}
\psi_{\beta}(z)=iz(r-\delta-\frac {\sigma^2}{2})-z^2\frac
{\sigma^2}{2}+\int_{-\infty}^{+\infty}[iz(1-e^y)+(e^{izy}-1)]e^{\beta
y}\Pi_0(dy), \label{fundamental}
\end{equation}
from her we obtain the partial derivative with respect to $\beta$:
\begin{equation}
\frac {\partial {\psi_{\beta}}}{\partial
\beta}(-z)=\int_{-\infty}^{+\infty}[(e^{-izy}-1)-iz(1-e^y)]ye^{\beta
y}\Pi_0(dy).\label{alan10}
\end{equation}
It will be needed in the approximations to be presented later.
\section{Results}
Let $\sigma_{imp}(x)(x,\beta)$ denote the Black-Scholes implied
volatility, that depends on the log-moneyness $x=\log K/F$, where
$K$ is the strike price and $F$ the forward price, the symmetry
parameter $\beta$ and denote  by $f_x$ the price of a European style
digital call option with maturity $T$ and barrier $K_x$, i.e. at
Maturity derivative pays off $f_x(y)=1_{\{e^y>K_x\}}$. The put
version is denoted by $g_x$ and pay off given by
$g_x(y)=1_{\{e^y\leq K_x\}}$. As in \cite{egp09}, we can take
$y=X_T+ln(S_0)+(r-\delta)T$, then $f_0(X_T)=1_{\{e^{X_T}>1\}}$. Now
our main result.
\begin{theorem}{\label{main}}
When symmetry holds ($\beta = -1/2$) the price of the digital call
option with barrier $K_0$ is given by:

$$f_0=e^{-rT}N(-\frac {\sigma\sqrt{T}}{2}).$$
In the case of an European digital put option, the price is given by:
$$g_0= e^{-rT}\left[1-N(-\frac {\sigma\sqrt{T}}{2})\right]$$

\end{theorem}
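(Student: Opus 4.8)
The plan is: (i) rewrite $f_0$ as the price of an at‑the‑money‑forward cash digital, $f_0=e^{-rT}\,\Q(S_T>F)$; (ii) use the symmetry relation \eqref{e:symmetry} together with a change of num\'eraire to express the asset‑or‑nothing component of a vanilla through its cash‑or‑nothing component, which pins $f_0$ down in terms of the at‑the‑money‑forward vanilla price; (iii) read off the closed form from the Black--Scholes call. Throughout, $\sigma$ is read as the at‑the‑money‑forward implied volatility $\sigma_{imp}(0,-1/2)$, i.e. the one for which $C_{BS}(F,\sigma)$ equals the true vanilla call price $C(F)$.

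\emph{Reduction.} Under the normalisation $y=X_T+\ln S_0+(r-\delta)T$ fixed before the statement, the barrier‑$K_0$ digital call pays $\1_{\{e^{X_T}>1\}}$; with $M_t:=S_t/(S_0e^{(r-\delta)t})=e^{-(r-\delta)t+X_t}$ denoting the $\Q$‑martingale of \eqref{e:psiuno} (so $M_0=1$) this is the indicator $\1_{\{M_T>1\}}=\1_{\{S_T>F\}}$, where $F:=S_0e^{(r-\delta)T}$ is the forward price. Hence $f_0=e^{-rT}\Q(M_T>1)=e^{-rT}\Q(S_T>F)$ and $g_0=e^{-rT}\Q(S_T\le F)$, and since the two pay‑offs sum to $1$ we get $f_0+g_0=e^{-rT}$; thus the second formula will follow from the first. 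I will assume $\Q(S_T=F)=0$ (automatic, e.g., when $\sigma>0$), so that strict and weak inequalities coincide.

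\emph{Symmetry step.} Let $\tilde\Q$ be the dual (share) measure, $d\tilde\Q/d\Q|_{{\cal F}_T}=M_T$. Relation \eqref{e:symmetry} reads ${\cal L}(M_T\mid\Q)={\cal L}(M_T^{-1}\mid\tilde\Q)$, which, by \eqref{e:pisymmetric} and the fact that for the jump class \eqref{e:beta} symmetry is equivalent to $\beta=-1/2$, holds here. Writing the change of measure out, this is the same as $\E_\Q[\varphi(M_T)]=\E_\Q[M_T\varphi(1/M_T)]$ for all bounded $\varphi$; with $\varphi=\1_{\{\cdot<1\}}$ it gives $\Q(M_T<1)=\E_\Q[M_T\1_{\{M_T>1\}}]$, that is $e^{-rT}\E_\Q[S_T\1_{\{S_T>F\}}]=F\,g_0$. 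Therefore
\begin{align*}
C(F)&=e^{-rT}\E_\Q[(S_T-F)^+]=e^{-rT}\E_\Q[S_T\1_{\{S_T>F\}}]-Fe^{-rT}\Q(S_T>F)\\
&=F\,g_0-F\,f_0=F\big(e^{-rT}-2f_0\big),
\end{align*}
so $f_0=\tfrac12\big(e^{-rT}-C(F)/F\big)$. (This is the put--call symmetry of \cite{FajardoMordecki2006b} at the self‑dual strike $K=F$, and could instead be cited from there.)

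\emph{Conclusion and the main obstacle.} By definition of implied volatility $C(F)=C_{BS}(F,\sigma)$; at the money forward $S_0e^{-\delta T}=Fe^{-rT}$ and $d_{1,2}=\pm\sigma\sqrt T/2$, so $C_{BS}(F,\sigma)/F=e^{-rT}[N(\sigma\sqrt T/2)-N(-\sigma\sqrt T/2)]=e^{-rT}[1-2N(-\sigma\sqrt T/2)]$; substituting into $f_0=\tfrac12(e^{-rT}-C(F)/F)$ gives $f_0=e^{-rT}N(-\sigma\sqrt T/2)$, and then $g_0=e^{-rT}-f_0=e^{-rT}[1-N(-\sigma\sqrt T/2)]$. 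The only genuinely non‑mechanical point is the symmetry step: one must check that the measure $\tilde\Q$ of \eqref{e:symmetry} really is $M_T\,d\Q$ on ${\cal F}_T$ (this uses $\tilde\Pi=\Pi$, $\tilde\sigma=\sigma$ from \eqref{e:pisymmetric}) and that $\Q(S_T=F)=0$, since otherwise the clean formula picks up a correction $-\tfrac12\,\Q(S_T=F)$. Note that, unlike the route via $f_0=-\partial_K C(F)$, this argument needs neither differentiability nor symmetry of the implied‑volatility smile — only its value at the money and the global symmetry of the law; and redoing the computation with the vanilla put in place of the call gives an independent check through put--call parity $C(F)=P(F)$.
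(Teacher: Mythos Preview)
Your proof is correct, but the route is genuinely different from the paper's. The paper works through the Lewis--Lipton Fourier representation: it writes the vanilla call $V(x,T)$ as a contour integral, differentiates in $x$, and compares with the chain-rule decomposition $\partial_x BS_c=-Ke^{-rT}N(d_2)+\mathrm{vega}\cdot\partial_x\sigma_{imp}$. The key input is then the Fajardo--Mordecki result that under symmetry the implied-volatility smile has \emph{zero slope} at the money, $\partial_x\sigma_{imp}(0,-1/2)=0$; this forces the Fourier integral at $x=0$ to equal $-N(-\sigma\sqrt T/2)$, and that integral is recognised (via \cite{egp09}) as $-e^{rT}f_0$. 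By contrast, you bypass Fourier analysis and the smile-slope argument entirely: your change-of-num\'eraire identity $\E_\Q[M_T\1_{\{M_T>1\}}]=\Q(M_T<1)$ gives $C(F)=F(e^{-rT}-2f_0)$ directly, and you only need the \emph{level} of the at-the-money implied volatility to evaluate $C(F)$. Your argument is shorter and more elementary; the paper's argument, though heavier, has the advantage of producing the integral representation \eqref{ernest} along the way, which is exactly what is reused in Section~4 for the Taylor approximations and in Section~5 for the numerics.
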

\begin{proof}
Lets denote $BS_c(\sigma_{imp}(x,\beta))$ the Black and Scholes
price of and European call with maturity $T$, strike $K$ and spot
price $S_0$. When the risk neutral interest rate is denoted by $r$
and the price is driven by a L\'{e}vy process with characteristic
triplet $(a,\sigma,\Pi)$, that BS-price is given by:
$$BS_c(\sigma_{imp}(x,\beta))=\sf{E}e^{-rT}\left(S_0e^{X_T}-K\right)^+.$$
Now, we know by L\'{e}vy-Khintchine formula
$$
\psi_{\beta}(z)=iza-z^2\frac
{\sigma^2}{2}+\int_{-\infty}^{+\infty}(e^{izy}-1)e^{\beta
y}\Pi_0(dy),
$$
when $\Pi(dy)=e^{\beta y}\Pi_0(dy)$. Also, as we are interested in
properties of option prices we need that discounted prices process
be martingales, then as showed in (\ref{ok}), characteristic
exponent must be given by:
\begin{equation}
\psi_{\beta}(z)=iz(r-\delta-\frac {\sigma^2}{2})-z^2\frac
{\sigma^2}{2}+\int_{-\infty}^{+\infty}[iz(1-e^y)+(e^{izy}-1)]e^{\beta
y}\Pi_0(dy). \label{fundamental}
\end{equation}

Then, we can compute the BS-price as:
\begin{eqnarray}
  BS_c(\sigma_{imp}(x,\beta)) &=&  \int_{-\infty}^{+\infty}e^{-rT}(S_0e^x-K)^+f_{X_T}(x)dx\\
   &=& \frac {e^{-rT}}{2\pi}\int\frac
   {-K^{iz+1}e^{-iz(\ln(S_0)+(r-q)T)}}{z(z-i)}e^{T\psi_{\beta}(-z)}dz
\end{eqnarray}
where the last equality was obtained using Parserval. By
\cite{lewis01} and \cite{lipton2002}, we know that the option value
($V$) of an European call with maturity $T$ and strike $K_x$ is
given by:
\begin{equation}
V(x,T)=\frac
{e^{-rT}}{2\pi}\int_{iv+\R}e^{-iz(\ln(S_0)+(r-q)T)}e^{T\psi_{\beta}(-z)}\frac{-K_x^{iz+1}}{z(z-i)}dz,\label{lewis}
\end{equation}
 with $v\in (a,b)$ and real constants $a<-1$ and $b>0$, where $K_x=S_0e^{(r-q)T}e^x$. Rewriting, we have
$$
V(x,T)=\frac
{-K_{x}e^{-rT}}{2\pi}\int_{iv+\R}e^{izx}\frac{e^{T\psi_{\beta}(-z)}}{z(z-i)}dz,\;\;v>1.
$$
Observe that \begin{eqnarray*}
 \frac {\partial
BS_c(K,\sigma_{imp}(x,\beta))}{\partial x}&=&\frac{\partial
BS_c(K,\sigma_{imp}(x,\beta))}{\partial K}\frac{\partial K}{\partial
x}+\frac {\partial BS_c(K,\sigma_{imp}(x,\beta))}{\partial
\sigma}\frac{\partial \sigma}{\partial x}\\
&=&-N(d_2)Ke^{-rT}+\frac{\partial
BS_c(K,\sigma_{imp}(x,\beta))}{\partial
\sigma}\frac{\partial \sigma}{\partial x}\\
&=&\frac {\partial V}{\partial x}(x,T).\\
\end{eqnarray*}
Where the last equality is obtained using \eqref{lewis}. Then,
$$\hbox{sgn}(\frac {\partial
{\sigma_{imp}}}{\partial x}(x,\beta))=\hbox{sgn}\left(\frac
{\partial V}{\partial x}(x,T)+N(d_2(x))Ke^{-rT} \right),$$ now
deriving \eqref{lewis} w.r.t $x$,we have:
$$
\frac {\partial V}{\partial x}(x,T)=\frac
{Ke^{-rT}}{2\pi}\int_{iv+\R}
   e^{izx}\frac{1}{iz}e^{T\psi_{\beta}(-z)}dz,\;\;v>1.$$
   Then,
$$\hbox{sgn}(\frac {\partial
{\sigma_{imp}}}{\partial x}(x,\beta))=\hbox{sgn}\left(\frac
{1}{2\pi}\int_{iv+\R}
   e^{izx}\frac{1}{iz}e^{T\psi_{\beta}(-z)}dz+N(d_2(x)) \right),
   \forall x,
$$

when $x=0$,
$$\hbox{sgn}(\frac {\partial
{\sigma_{imp}}}{\partial x}(0,\beta))=\hbox{sgn}\left(\frac
{1}{2\pi}\int_{iv+\R}
   \frac{1}{iz}e^{T\psi_{\beta}(-z)}dz+N(-\frac{\sigma\sqrt{T}}{2})
\right).
$$
Now we know by a result due to \cite{FajardoMordecki2006b} that in
symmetric L\'evy markets ($\beta=-0.5$), the implied volatility must
be symmetric w.r.t  to the log-moneyness i.e. $ \frac {\partial
{\sigma_{imp}}}{\partial x}(0,-0.5)=0$, Then
$$\frac
{1}{2\pi}\int_{iv+\R}
   \frac{1}{iz}e^{T\psi_{-0.5}(-z)}dz=-N(-\frac{\sigma\sqrt{T}}{2}).
$$
Finally, remember that the price $f_x$ of a digital call option with
barrier $K_x$, computed in our Lévy market can be expressed
as\footnote{See \cite{egp09}.}: \begin{equation} f_x=-\frac
{e^{-rT}}{2\pi}\int_{iv+\R}
   e^{izx}\frac{1}{iz}e^{T\psi_{\beta}(-z)}dz,
   \forall x,\;\;v>0,\label{ernest}
\end{equation}
From here the result follows and the put case is an immediate
consequence of $g_x=e^{-rT}-f_x$.
\end{proof}
Now under symmetry we can price other barrier options as asset-or-
nothing barrier options,  i.e. at maturity derivative pays off
$p(S_T)=S_T1_{\{S_T>K\}}$.
\begin{corollary}
Under symmetry the price $p$ of an asset-or-nothing option with
barrier $S_0^2$ is given by:
$$p=e^{-rT}S_0\left[1-N(-\frac
{\sigma\sqrt{T}}{2})\right].$$
\end{corollary}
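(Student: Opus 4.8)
The plan is to reduce the asset-or-nothing claim to the European digital put already priced in Theorem~\ref{main}, by a change of numeraire combined with the symmetry relation \eqref{e:symmetry}. Writing the price as $p=e^{-rT}\E\big[S_T\mathbf 1_{\{S_T>S_0^2\}}\big]$, I would first use the stock itself as numeraire: by \eqref{e:psiuno} the discounted, reinvested price $Z_t=e^{-(r-\delta)t}S_t/S_0=e^{-(r-\delta)t+X_t}$ is a $\Q$-martingale with $\E Z_t=1$, hence defines the dual (``share'') measure $\tilde{\Q}$ through $d\tilde{\Q}/d\Q\big|_{\mathcal F_T}=Z_T$. Pulling the factor $S_T=S_0e^{(r-\delta)T}Z_T$ out of the expectation turns $p$ into $S_0$ times the (appropriately discounted) $\tilde{\Q}$-probability of the barrier event, i.e. into a plain digital evaluated under $\tilde{\Q}$.

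The second step is to transport that digital back to $\Q$ by symmetry. Since $S_T=S_0e^{X_T}$, the event $\{S_T>S_0^2\}$ is an ``up'' event for $X_T$; and because $\beta=-1/2$, relation \eqref{e:symmetry} (equivalently $\Pi(dy)=e^{-y}\Pi(-dy)$, see \eqref{e:pisymmetric}) says precisely that, once the deterministic drift is removed, $X_T$ under $\tilde{\Q}$ and $-X_T$ under $\Q$ have the same law. Reflecting the barrier event then converts the $\tilde{\Q}$-digital call into a $\Q$-digital put, and the point of choosing the barrier to be $S_0^2$ is that the reflected strike lands exactly at the forward, so the resulting contract is the $g_0$ of Theorem~\ref{main}. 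Substituting $g_0=e^{-rT}\big[1-N(-\frac{\sigma\sqrt{T}}{2})\big]$ and restoring the $S_0$ that the payoff carries yields $p=e^{-rT}S_0\big[1-N(-\frac{\sigma\sqrt{T}}{2})\big]$.

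I expect the main obstacle to be the bookkeeping in the second step: one must keep careful track of the drift and discount factors when moving the barrier event across \eqref{e:symmetry}, and check that with barrier $S_0^2$ the reflected digital is literally the at-the-money put of Theorem~\ref{main}, so that its closed form can be quoted verbatim. If one prefers to bypass the change of measure, an equivalent route is to mimic the proof of Theorem~\ref{main} step by step, representing $p$ through the Lewis--Parseval integral used there but with the extra $S_T$ in the payoff shifting the integration contour by $i$, and then invoking the reflection symmetry of $\psi_{-1/2}$; one can also use the decomposition $S_T\mathbf 1_{\{S_T>S_0^2\}}=(S_T-S_0^2)^+ + S_0^2\,\mathbf 1_{\{S_T>S_0^2\}}$ as a consistency check against the vanilla call and the digital call with the same barrier.
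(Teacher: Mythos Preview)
Your approach is essentially the same as the paper's: the paper invokes Corollary~2.14 of \cite{CarrLee2008} (applied to the martingale $S_t=\exp(X_t+\ln S_0+(r-\delta)t)$ with $H=1$) to obtain the identity $E[S_0\,\mathbf 1_{\{S_T<1\}}]=E[S_T\,\mathbf 1_{\{S_T\ge S_0^2\}}]$, and then quotes Theorem~\ref{main} for the right-hand side; you are proposing to derive that same identity directly via the share-measure change of numeraire together with the reflection relation~\eqref{e:symmetry}, which is precisely how the Carr--Lee corollary is proved. The only practical difference is that the paper treats the reflection step as a black-box citation, whereas you unpack it; your caveat about the drift/discount bookkeeping is well placed, since the paper's martingale is $S_0e^{(r-\delta)t+X_t}$ rather than the raw stock $S_0e^{X_t}$, and keeping that normalisation straight is exactly what makes the reflected barrier land at the forward so that Theorem~\ref{main} applies verbatim.
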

\begin{proof}
Using Corollary 2.14 in \cite{CarrLee2008} with the martingale
$S_t=\exp(X_t+\ln(S_0)+(r-\delta)t)$ and $H=1$, we have:
$$E\left[S_01_{\{S_T<1\}}\right]=E\left[S_T1_{\{S_T\geq
S_0^2\}}\right],$$ and by Th. \ref{main}, we have
$$p=e^{-rT}E\left[S_T1_{\{S_T\geq S_0^2\}}\right]=S_0e^{-rT}\left[1-N(-\frac
{\sigma\sqrt{T}}{2})\right]$$
\end{proof}
\section{Absence of Symmetry}
\subsection{Approximations}
Now denote by $I(\beta,x)$ the integral defined in \eqref{ernest}.
Let $I_\beta$ and $I_x$ denote the partial derivatives of
$I(\beta,x)$ with respect to $\beta$ and $x$. Then some
approximations can be obtained.
\begin{corollary}
If we consider not too asymmetric markets ($|\beta+0.5|< \epsilon$)
and near at-the-money digital call options ($|x|< \varepsilon$).
\begin{itemize}
\item For a barrier $K_x$, we have
$$I(-0.5,x)\approx e^{-rT}N(-\frac
{\sigma\sqrt{T}}{2})+xI_x(-0.5,0)$$
\item For any symmetry parameter $\beta$, we have:
$$I(\beta,0)\approx e^{-rT}N(-\frac
{\sigma\sqrt{T}}{2})+(\beta+0.5)I_\beta(-0.5,0)$$
\item In general, we have:
$$I(\beta,x)\approx e^{-rT}N(-\frac
{\sigma\sqrt{T}}{2})+(\beta+0.5)I_\beta(-0.5,0)+xI_x(-0.5,0)$$

\end{itemize}
\end{corollary}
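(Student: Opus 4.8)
The plan is to read the three displays as the first-order Taylor expansions of the map $(\beta,x)\mapsto I(\beta,x)$ about the symmetric, at-the-money point $(-0.5,0)$, with the constant term supplied by Theorem \ref{main}. Indeed, $I(\beta,x)$ is exactly the digital call price $f_x$ of \eqref{ernest}, and at $x=0$, $\beta=-0.5$ this integral is the one evaluated in the proof of Theorem \ref{main}, so
\[
I(-0.5,0)=f_0=e^{-rT}N\!\left(-\frac{\sigma\sqrt T}{2}\right),
\]
which is the common leading term in all three approximations. It therefore suffices to show that $I$ is of class $C^{1}$ (conveniently, $C^{2}$) in a neighbourhood of $(-0.5,0)$ and to identify the first-order coefficients as $I_x(-0.5,0)$ and $I_\beta(-0.5,0)$.

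First I would differentiate under the integral sign in \eqref{ernest}. Differentiating in $x$ produces a factor $iz$ that cancels the $1/(iz)$, so $I_x(\beta,x)=-\tfrac{e^{-rT}}{2\pi}\int_{iv+\R}e^{izx}e^{T\psi_\beta(-z)}\,dz$, while differentiating in $\beta$ brings down the factor $T\,\partial\psi_\beta/\partial\beta(-z)$, which is available in closed form from \eqref{alan10}; evaluating both at $(-0.5,0)$ gives precisely the coefficients appearing in the statement. To legitimise these interchanges one needs a dominating function, integrable along the contour $iv+\R$ and uniform for $\beta$ near $-0.5$ and $x$ near $0$; this follows from the decay of $|e^{T\psi_\beta(-z)}|$ in $\Re z$ (for instance from the Gaussian term $e^{-T\sigma^{2}(\Re z)^{2}/2}$ when $\sigma>0$, or from a mild regularity hypothesis on $\Pi_0$ otherwise), together with the fact that the extra factors $iz$ and $\partial\psi_\beta/\partial\beta(-z)$ grow at most polynomially on the contour.

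With this regularity in hand the three assertions are immediate. For the first bullet we freeze $\beta=-0.5$ and apply the one-variable Taylor expansion in $x$ at $x=0$; for the second we freeze $x=0$ and expand in $\beta$ at $\beta=-0.5$; for the third we use the two-variable first-order Taylor formula at $(-0.5,0)$. In each case the remainder is quadratic in the displacements, hence of order $\varepsilon^{2}$, $\epsilon^{2}$, or $\epsilon\varepsilon$ under the standing assumptions $|x|<\varepsilon$ and $|\beta+0.5|<\epsilon$, and is exactly what the symbol $\approx$ discards.

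The only real work is the domination argument justifying differentiation under the integral sign, together with the uniform control of the contour integrals in a neighbourhood of $(-0.5,0)$ needed to bound the second-order remainder; once that is granted, everything else is the routine bookkeeping of Taylor's theorem plus the substitution of the value of $I(-0.5,0)$ from Theorem \ref{main}.
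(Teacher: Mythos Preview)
Your proposal is correct and follows exactly the same approach as the paper: the paper's entire proof reads ``It follows directly from Taylor approximations and our Theorem 3.1,'' which is precisely your strategy of expanding $I(\beta,x)$ to first order about $(-0.5,0)$ and inserting the value $I(-0.5,0)=e^{-rT}N(-\sigma\sqrt{T}/2)$ from Theorem~\ref{main}. Your additional care in justifying differentiation under the integral sign and bounding the remainder goes well beyond what the paper supplies.
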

\begin{proof}
It follows directly from Taylor approximations and our Theorem 3.1.
\end{proof}
Similar approximations can be obtained for the digital put option, since $g_x=e^{-rT}-I(\beta,x)$.\\

Observe that in the approximations we need to compute:
\begin{equation}
I_\beta(-0.5,0)=\frac {Te^{-rT}}{2\pi}\int_{iv+\R}
   \frac{1}{iz}e^{T\psi_{-0.5}(-z)}\frac {\partial
{\psi_{-0.5}}}{\partial
\beta}(-z)dz,\;v>0\label{intf1}\end{equation} and
\begin{equation}
I_x(-0.5,0)=\frac {e^{-rT}}{2\pi}\int_{iv+\R}
  e^{T\psi_{-0.5}(-z)}dz,\;v>0\label{intf2}\end{equation} then we need a specific expression for the characteristic exponent
 and using FFT techniques we can compute the integrals, as we will
 present in Section 5.
\subsection{More General Results}
First, we need to extend Th. 7.6 in \cite{CarrLee2008}.
\begin{theorem}\label{peter2}
Let $X_t$ be a Lévy process with characteristic triplet
$(\mu,\sigma,\Pi)$ and characteristic exponent denoted by $\psi$,
with Lévy measure given by $\Pi(dy)=e^{\beta y}\Pi_0(dy)$, where
$\beta \neq -0.5$ (absence of symmetry) and define
$$
\alpha:=\left\{
  \begin{array}{ll}
    -\frac {\psi(2\beta i)}{2\beta}, & \beta\neq 0; \\
    \mu, & \beta=0.
  \end{array}
\right.
$$
Then for any payoff function\footnote{A payoff function is a
nonnegative Borel function on $\R$.} $f$, we have:
\begin{equation}
Ef(S_T)=E\left[\left(\frac {S_T}{S_0e^{\alpha
T}}\right)^{-2\beta}f\left(\frac {S_0^2e^{2\alpha
T}}{S_T}\right)\right]\label{peter}
\end{equation}
\end{theorem}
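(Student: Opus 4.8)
The plan is to establish the identity \eqref{peter} by verifying it first for exponential payoffs $f(s) = s^{iz}$ (equivalently $f(e^y) = e^{izy}$) on the appropriate strip of $z$, and then extending to general payoff functions by a density/uniqueness argument. This is the natural route because $Ef(S_T)$ for $f(s)=s^{iz}$ is governed directly by the characteristic exponent via $Ee^{zX_T}=e^{T\psi(z)}$, and both sides of \eqref{peter} become explicit exponentials in $z$ whose equality reduces to an algebraic relation involving $\psi$, $\beta$, and $\alpha$.

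First I would write $S_T = S_0 e^{X_T}$ and reduce \eqref{peter} to a statement purely about the law of $X_T$. Testing against $f(e^y)=e^{izy}$, the left side is $S_0^{iz}e^{T\psi(iz)}$ (up to the deterministic drift pieces bundled into the definition of $\psi_\beta$ in \eqref{ok}; I would keep careful track of whether $\psi$ here is the martingale-adjusted exponent or the bare one, matching the triplet $(\mu,\sigma,\Pi)$ in the statement). The right side, after substituting $S_T/S_0 = e^{X_T}$, becomes $e^{-2\beta\alpha T}S_0^{iz}\,E\big[e^{(-2\beta - iz)X_T}\big]\cdot(\text{const})$, i.e. a multiple of $e^{T\psi(-2\beta i + \ldots)}$. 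Equality for all $z$ on a strip then forces a functional equation; the key point is the reflection symmetry of $\Pi(dy)=e^{\beta y}\Pi_0(dy)$, namely that substituting $y\mapsto -y$ in the integral term of \eqref{e:char-exp} and using $\Pi_0(dy)=\Pi_0(-dy)$ produces exactly the $e^{\beta y}$-weighting with a shifted exponent. This is precisely the mechanism behind \eqref{e:pisymmetric} when $\beta=-1/2$; here $\beta$ is general, which is why the correction factor $(S_T/S_0e^{\alpha T})^{-2\beta}$ and the shifted argument $S_0^2 e^{2\alpha T}/S_T$ are needed to absorb the mismatch.

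The role of $\alpha$ is to calibrate the drift so that the Gaussian part and the truncation-function part of $\psi$ also transform correctly, not just the pure-jump part. I expect the computation to show that the identity holds iff $\alpha$ is chosen so that $-2\beta\alpha$ cancels the residual linear-in-$z$ term, and tracing this through yields $\alpha = -\psi(2\beta i)/(2\beta)$ for $\beta\neq 0$; the $\beta=0$ case is the degenerate limit where the exponent-shift disappears and one is left with the plain reflection $X_T \mapsto -X_T$ around the drift $\mu$, so $\alpha=\mu$ is forced by continuity (or by redoing the $\beta=0$ computation directly, where the jump measure is already symmetric). I would remark that $2\beta i$ lies in the admissible strip $\{a<\Im z<b\}$ precisely under the standing assumption $a\le -1$, $b\ge 0$ together with $\beta\neq -1/2$, so $\psi(2\beta i)$ is well-defined.

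The main obstacle will be the extension from exponential payoffs to arbitrary nonnegative Borel $f$. Once the identity is known for $f(e^y)=e^{izy}$ along a vertical line $\Im z = c$ in the strip, one wants to conclude it for all bounded continuous $f$ by Fourier inversion, then for indicators and general nonnegative $f$ by monotone class / monotone convergence. The care needed is integrability: the weight $(S_T/\cdots)^{-2\beta}$ multiplies $f$ by a power of $S_T$, so both sides must be interpreted in $[0,+\infty]$ and one must check that the measure $e^{-2\beta X_T + \text{const}}\,d\Q$ is the relevant reference measure — it is a genuine change of measure (an Esscher-type transform) under which $S_0^2 e^{2\alpha T}/S_T$ has the same law as $S_T$ under $\Q$. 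I would phrase the cleanest version as: the $\Q$-law of $X_T$ and the law of $-X_T$ under the Esscher-transformed measure $d\Q_\beta \propto e^{-2\beta X_T}d\Q$ coincide up to the deterministic shift $2\alpha T$; \eqref{peter} is then just the change-of-variables restatement of this, valid for every nonnegative Borel $f$ with both sides possibly infinite. Verifying that Esscher identity is exactly the strip computation above, so no genuinely new work is needed beyond bookkeeping.
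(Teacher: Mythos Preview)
Your approach is correct but differs from the paper's. Rather than testing against exponentials $f(s)=s^{iz}$ and extending by a density/monotone-class argument, the paper reduces to an existing symmetry result: it defines the auxiliary process
\[
Y_t:=-2\beta X_t+(T-t)\psi(2\beta i),
\]
checks that its L\'evy measure satisfies the symmetry condition \eqref{e:pisymmetric} (i.e.\ has symmetry parameter $-1/2$), observes that $\widetilde{S}_t=S_0^{-2\beta}e^{Y_t}=(S_te^{\alpha(T-t)})^{-2\beta}$ is a martingale, and then simply applies Theorem~2.5 of Carr--Lee (the put--call symmetry identity for symmetric martingales) to the payoff $x\mapsto f(x^{-1/(2\beta)})$. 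No Fourier inversion or monotone-class extension is needed because the cited theorem already covers all nonnegative Borel $f$.

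Your route is more self-contained and makes the mechanism (the reflection $y\mapsto -y$ in the jump integral) explicit, at the cost of having to carry out the extension step carefully. The paper's route is shorter but outsources the hard work to Carr--Lee. Note that the ``cleanest version'' you sketch at the end --- the Esscher transform $d\Q_\beta\propto e^{-2\beta X_T}d\Q$ under which $-X_T$ (shifted by $2\alpha T$) has the $\Q$-law of $X_T$ --- is essentially the paper's argument in different clothing: the martingale $\widetilde{S}_t$ is precisely the density process of that Esscher transform, and Carr--Lee's Theorem~2.5 is the equality-in-law you describe. So the two proofs converge once you adopt your final reformulation.
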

\begin{proof}
Here we follow \cite{CarrLee2008}, i.e, define the following
process:
$$Y_t:=-2\beta X_t+(T-t)\psi(2\beta i).$$
Denote by $\Pi_y$ its Lévy measure. Then,
$$\Pi_y(x)=e^{-(\beta+\frac 12)}\Pi_x(dx)=e^{-\frac 12}\Pi_0(dx),$$
as \cite{FajardoMordecki2006b} proved, process  $Y_t$ satisfy market
symmetry property. Moreover, as $E(e^{-2\beta X_t})=e^{t\psi(2\beta
i)}<\infty$, the price process
$\widetilde{S}_t=S_0^{-2\beta}e^{Y_t}=\left(S_te^{\alpha
(T-t)}\right)^{-2\beta}$ is a martingale and satisfy the desired symmetry property.\\

Finally, using our symmetric process $\widetilde{S}_t$, we can apply
Th. 2.5 in \cite{CarrLee2008} to the function $x\mapsto
f(x^{-1/2\beta})$, from here the result follows.
\end{proof}
\section{Numerical Examples}
Lets consider the Normal Inverse Gaussian(NIG) distribution. Which
has the following characteristic function:
\begin{equation}
\psi_\beta(z)=iz\mu+\delta\left(\sqrt{\alpha^2-\beta^2}-\sqrt{\alpha^2-(\beta+iz)^2}\right),\;-\beta-\alpha<Im(z)<\beta+\alpha.\label{intf3}\end{equation}
From here \begin{equation} \frac {\partial {\psi_{\beta}}}{\partial
\beta}(z)=\delta\left(\frac
{\beta+iz}{\sqrt{\alpha^2-(\beta+iz)^2}}-\frac{\beta}{\sqrt{\alpha^2-\beta^2}}\right)\label{intf4}\end{equation}
Before pricing Barrier options we need to estimate the parameters of
the NIG distribution. To this end we consider daily returns of
S\&P500 from 12/01/2009 to 12/01/2011 and using maximum likelihood
estimation we find
$(\mu,\alpha,\delta,\beta)=(0.0018,49.99,0.0085,-9.22)$. But we need
the risk-neutral parameters, then we use the density given by the
Esscher Transform. To compute this density we need the interest rate
so we use the interest rate given by the U.S. Treasury in that date
$r=0.0012$. Under this transformation we obtain the following
parameters:
$$(\mu^*,\alpha^*,\delta^*,\beta^*)=(0.0018,49.99,0.0085,-4.18)$$
With these parameters and the expressions \eqref{intf3} and
\eqref{intf4}, we can compute the integrals given in \eqref{intf1}
and \eqref{intf2} in the overlap of the strips giving $I_\beta(-0.5,0)=0.2621$ and $I_x(-0.5,0)=7.3212$ .\\

Now we want to price barrier options on S\&P500 that matures in one
year, i.e. $T=1$. We need the volatility then we use the VIX of
12/01/11, as a proxy of S\&P500 volatility, $\sigma=0.2741$.

\subsection{Call and put digital option}
Using our Th. \ref{main} we know that he price of a call and put
digital option ATM under symmetry is given by:

$$f_0=e^{-0.0012}N(-0.1371)=0.4449,$$
and $$g_0=e^{-0.0012}[1-N(-0.1371)]=0.5539$$ and the approximations
are given by:
\begin{itemize}
\item For any barrier $K_x$,  such $x\in [-\varepsilon,\varepsilon]$ we have
$$I(-0.5,x)\approx 0.4449+7.3212x$$
\item For any symmetry parameter $\beta\in [-0.5-\epsilon,-0.5+\epsilon]$, we have:
$$I(\beta,0)\approx 0.4449+0.2621(\beta+0.5)$$
\item In general, we have:
$$I(\beta,x)\approx 0.4449+0.2621(\beta+0.5)+7.3212x$$
\end{itemize}
Now using $\varepsilon=\epsilon=0.01$, we have the following figure.
\begin{figure}[!htb]
    \begin{center}
       \includegraphics[width=0.7\linewidth]{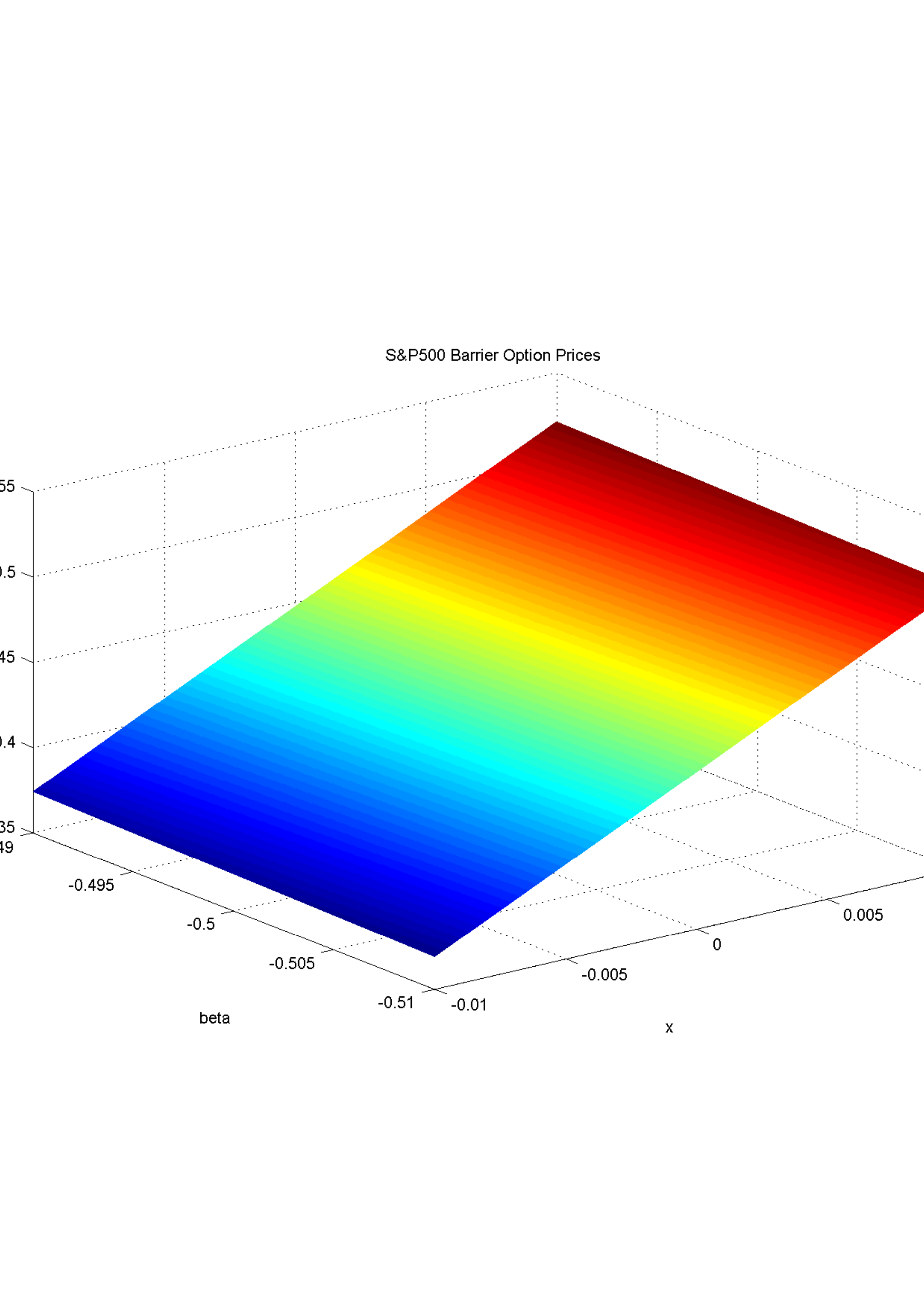}
            \caption{Barrier Option Prices for different Moneynesss and Symmetries}
        \label{fig:sixindex}
    \end{center}
\end{figure}
\subsection{Down-and-in power option}
Now to price a down-and-in power option on S\&P500 we observe that
in the NIG case we have: $$\alpha=\mu=0.0018.$$ Using $T=1$ and
applying Th. \ref{peter2} to the function $f(S_T)=1_{\{S_T>1\}}$, we
obtain:
$$E\left(1_{\{S_T>1\}}\right)=E\left(1_{\{S_T^{8.36}>1\}}\right)=E\left[\left(\frac {S_T}{S_0e^{0.0018}}\right)^{8.36}1_{\left\{\frac
{S_0^2e^{0.0036}}{S_T}>1\right\}}\right]$$

$$E\left(1_{\{\widetilde{S}_T>1\}}\right)=E\left[\left(\frac {(S^*_T)^{8.36}}{e^{0.015}}\right)1_{\left\{\frac
{1444.23e^{0.0036}}{S^*_T}>1\right\}}\right]$$ Where
$\widetilde{S}_T=S_T^{8.36}$ satisfies symmetry properties and
$S^*_T=e^{X_T}$ is an asymmetric process. Using Th. \ref{main}, we
obtain:

$$e^{-0.0012}E\left[\left(S^*_T\right)^{8.36} 1_{\left\{1449.4>S^*_T\right\}}\right]=0.4449*e^{0.015}=0.4516$$
\section{Conclusions}
We have obtained an easy way to compute the price some barrier
options under Lévy processes, using the symmetry property
obtained by \cite{FajardoMordecki2006b} and in case of absence of symmetry we have obtained some approximations and extending Th. 7.6 in \cite{CarrLee2008} we have obtained a general relationship.\\

Some possible extensions are of interest, as the pricing of the
class of Contingent Convertible with Cancelable Coupons (CoCa CoCos)
introduced by \cite{wimetall.} under Lévy processes.
\bibliographystyle{econometrica}
\bibliographystyle{econometrica}
\bibliography{catalog2010}
\end{document}